\documentclass[10pt,english]{article} 

\usepackage{amsmath, amsthm, amssymb}
\usepackage{a4, babel, graphicx, listings, fancyhdr, verbatim}
\usepackage{mathrsfs} 
\usepackage{anysize} 
\usepackage{natbib} 
\usepackage[format = hang]{caption}                              

\usepackage{a4}
\usepackage[latin1]{inputenc}
\usepackage{amsmath, amsthm}
\usepackage{graphicx}
\usepackage{amssymb}
\usepackage{verbatim}
\usepackage{listings}
\usepackage{alltt}
\usepackage{babel}
\usepackage{mathrsfs}
\usepackage{ae}
\usepackage{natbib}
\usepackage{booktabs} 

\newtheoremstyle{theorem}
{10pt} 
{10pt} 
{\sl} 
{\parindent} 
{\bf} 
{. } 
{ } 
{} 

\theoremstyle{theorem}

\newtheorem{lemma}{Lemma}

\def\beq{\begin{eqnarray}}
\def\eeq{\end{eqnarray}}

\def\beqn{\begin{eqnarray*}}  
\def\eeqn{\end{eqnarray*}}

\def\E{{\rm E}}
\def\Var{{\rm Var}}

\def\N{{\rm N}}

\def\Pr{{\rm pr}}

\def\quadandquad{\quad \hbox{and} \quad}

\def\hatt{\widehat}
\def\tilda{\widetilde}

\def\sumin{\sum_{i=1}^n}

\def\eps{\varepsilon}
\def\half{\hbox{$1\over2$}}

\def\rootn{\sqrt{n}}
\def\data{{\rm data}}
\def\obs{{\rm obs}}
\def\midd{\,|\,}
\def\tr{{\rm t}}

\def\calU{{\cal U}}

\def\wide{{\rm wide}}
\def\rr{{\rm rr}}
\def\rrr{{\rm rrr}}
\def\conf{{\rm conf}}
\def\risk{{\rm risk}}

\def\fric{{\rm FRIC}}
\def\afric{{\rm AFRIC}}

\def\mse{{\rm mse}}
\def\rmse{{\rm rmse}}
\def\rss{{\rm rss}}

\def\AIC{{\rm AIC}}

\def\Tr{{\rm Tr}}

\def\mse{{\rm mse}}

\def\cc{{\rm cc}}

\numberwithin{equation}{section} 
\numberwithin{figure}{section}
\numberwithin{table}{section}
\usepackage{hyperref}

\title{The Focused Relative Risk Information Criterion \\ 
   for Variable Selection in Linear Regression}

\def\heute{{July 2020}}
\date{\heute}

\begin{document}


\maketitle

\centerline{\large\bf Nils Lid Hjort}

\medskip 
\centerline{\bf Department of Mathematics, University of Oslo}

\begin{abstract}
\noindent
\small{This paper motivates and develops 
a novel and focused approach to variable selection in 
linear regression models. For estimating the 
regression mean $\mu=\E\,(Y\midd x_0)$, for the covariate 
vector of a given individual, there is a list of
competing estimators, say $\hatt\mu_S$ for each 
submodel $S$. Exact expressions are found for 
the relative mean squared error risks, when compared 
to the widest model available, say $\mse_S/\mse_\wide$.
The theory of confidence distributions is used for 
accurate assessments of these relative risks. 
This leads to certain Focused Relative Risk Information Criterion 
scores, and associated FRIC plots and FRIC tables, 
as well as to Confidence plots to exhibit 
the confidence the data give in the submodels. 
The machinery is extended to handle many focus 
parameters at the same time, with appropriate 
averaged FRIC scores. The particular case where 
all available covariate vectors have equal 
importance yields a new overall criterion for 
variable selection, balancing complexity and fit
in a natural fashion. A connection to the Mallows 
criterion is demonstrated, leading also to 
natural modifications of the latter. 
The FRIC and AFRIC strategies are illustrated for real data.  

\noindent
{\it Key words:}
focused information criteria, 
FRIC plots, 
linear regression, 
Mallows Cp, 
variable selection 
}
\end{abstract}



\section{Introduction and summary}
\label{section:intro}

Mrs.~Jones is pregnant (again). 
She is white, 40 years old, of average weight 60 kg
before pregnancy, and a smoker. 
What is the expected birthweight of her child-to-come? 

To address this and similar questions, involving comparison 
and ranking of many submodels of a given wide regression
model, we use a dataset on $n=189$ mothers and babies, 
discussed and analysed in \citet[Ch.~2]{ClaeskensHjort08},  
and apply linear regression for the birthweight $y$
in terms of five covariates 
$x_1$ (age), 
$x_2$ (weight in kg before pregnancy),
$x_3$ (indicator for smoker or not),
$x_4$ (ethnicity indicator 1), 
$x_5$ (ethnicity indicator 2),  
and where `white' corresponds to these two being zero,
i.e.~not belonging to the two other ethnic groups in question.  
The full linear regression model has 
\beq
\label{eq:widemodeljones}
y_i=\beta_0+\beta_1x_{i,1}+\cdots+\beta_5x_{i,5}+\eps_i 
   \quad {\rm for\ }i=1,\ldots,n, 
\eeq 
with the $\eps_i$ assumed i.i.d.~from a normal $\N(0,\sigma^2)$. 
The task is to estimate $\mu=\E\,(Y\midd x_0)$, 
with $x_0=(x_{0,1},\ldots,x_{0,5})$ the covariate vector
for Mrs.~Jones. For each of the $2^5=32$ submodels, corresponding 
to taking covariates in and out of the above
regression equation, there is a point estimate,
say $\hatt\mu_S$, with $S$ a subset of $\{1,\ldots,5\}$,
and these are plotted on the vertical axis of
the {\it FRIC plot} of Figure \ref{figure:figure11}, 
along with submodel-specific 80\% confidence intervals, 
to indicate their relative precision. 
The crucial new and extra aspect of the plot are 
the {\it FRIC scores}, plotted on the horizontal axis.
These {\it Focused Relative Risk Information Criterion}
scores are accurately constructed estimates 
of the relative risk, the ratio of mean squared errors,
relative to the wide model, i.e.
\beq
\label{eq:hereisrr} 
\rr_S
={\mse_S\over \mse_\wide}
={\E\,(\hatt\mu_S-\mu)^2\over \E\,(\hatt\mu_\wide-\mu)^2}
\quad {\rm for\ }S{\rm\ subset\ of\ }\{1,\ldots,5\}. 
\eeq 
Thus models with FRIC scores below 1 are judged 
to be better than the full wide model 
of (\ref{eq:widemodeljones}), for the given 
purpose of estimating the mean well for the 
specified covariate vector; FRIC values higher than 1 indicate
that the submodel does a worse job than the wide model itself. 
Only the wide model based estimator $\hatt\mu_\wide$ 
is guaranteed to have zero bias, so the statistical game 
is to use the data to hunt for submodels leading 
to lower variances and biases not far from zero.  

\begin{figure}[h]
\centering
\includegraphics[scale=0.55]{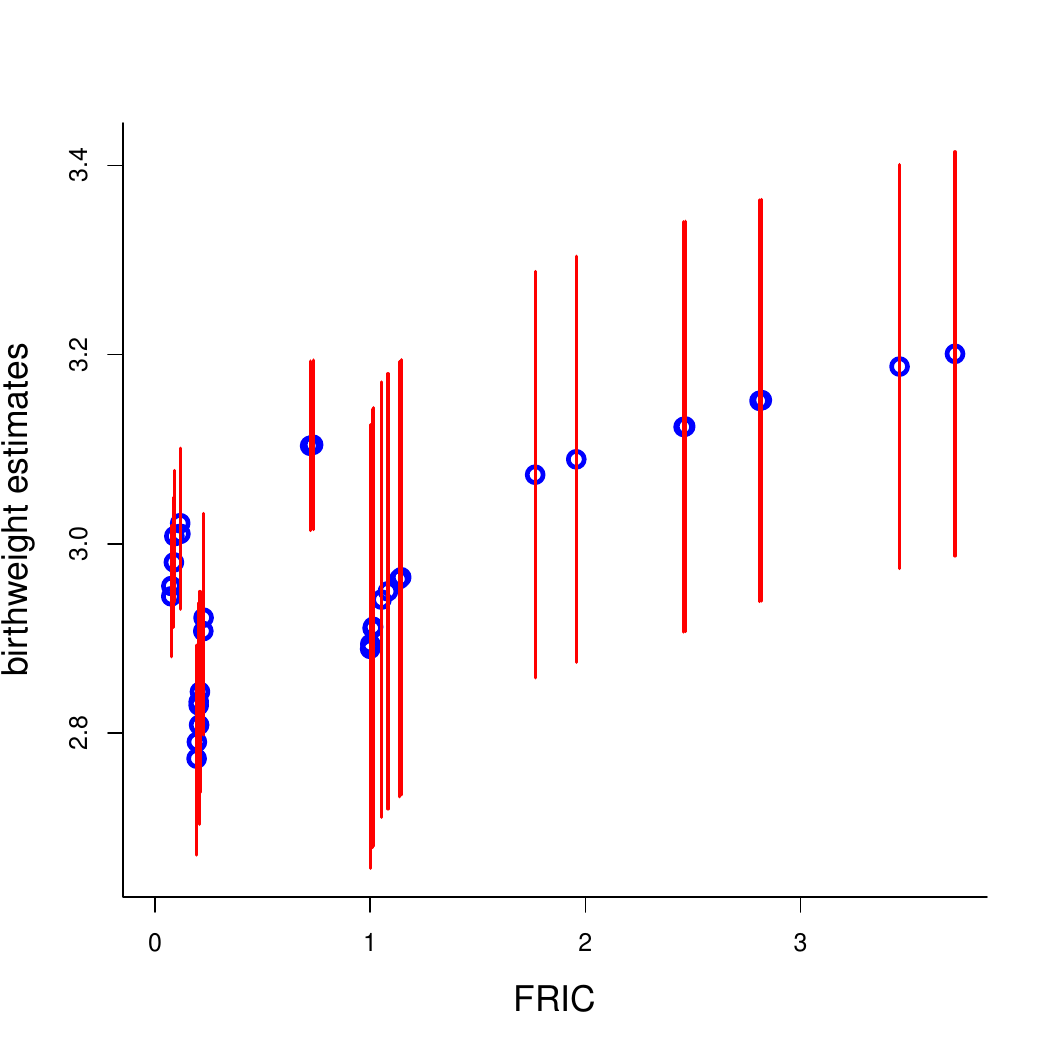}
\caption{FRIC plot for the $2^5=32$ models for estimating
the birthweight of the child-to-come, for Mrs.~Jones
(white, age 40, 60 kg, smoker). The FRIC scores 
are estimates of the relative risks $\rr_S=\mse_S/\mse_\wide$
of (\ref{eq:hereisrr}); the blue circles are the associated
point estimates; and the vertical lines are submodel-based 
80\% confidence intervals. Here 16 submodels have
FRIC scores smaller than 1 and are judged to be better
than the wide model. 
See Table \ref{table:table11} for identification of 
the best models and their FRIC scores and estimates.} 
\label{figure:figure11}
\end{figure}

\subsection{FRIC plots and tables for relative risks}

Inside the natural framework of linear regression models,
basic formulae for the required quantities are worked 
out in Section \ref{section:basics}. The operating assumption
is that the wide model, with all covariates on board, 
is in force. The denominator of (\ref{eq:hereisrr}) 
is standard, whereas more care is needed to find 
a fruitful formula for the numerator, since submodels 
will carry biases. In Section \ref{section:rrwithCD} 
we construct several natural estimators for the relative risk 
quantities $\rr_S$, and such are indeed used to produce 
Figure \ref{figure:figure11}. 
Importantly, as part of this development we construct 
informative and exact {\it confidence distributions} 
for the relative risks. These are data driven 
cumulative distribution functions $C_S(\rr_S,\data)$ 
with the property 
\beq
\label{eq:hereisCD}
\Pr_{\beta,\sigma}\{\rr_S\colon C_S(\rr_S,\data)\le\alpha\}=\alpha
   \quad {\rm for\ all\ }\alpha\in(0,1). 
\eeq 
In (\ref{eq:hereisCD}) the `data' are random, 
with distribution governed by (\ref{eq:widemodeljones}), 
and the identity, being valid for all $(\beta,\sigma)$, 
secures that accurate confidence intervals
at all levels can be read off for the relative risks 
$\mse_S/\mse_\wide$. As we show in Section \ref{section:rrwithCD},
these confidence distributions also start out with certain
pointmasses at observable minimum positions, say $\rr_{S,\min}$. 
Natural confidence intervals for the relative risks $\rr_S$
therefore do {\it not} take the usual form of 
$\hatt\rr_S\pm z_\alpha$, say, an estimate along with a 
plus-minus estimated error. Instead, confidence intervals
induced by the confidence distributions (\ref{eq:hereisCD})
are often asymmetric, and sometimes start out at the 
indicated minimum possible value; 
see indeed Figure \ref{figure:figure15}.

\begin{figure}[h]
\centering
\includegraphics[scale=0.55]{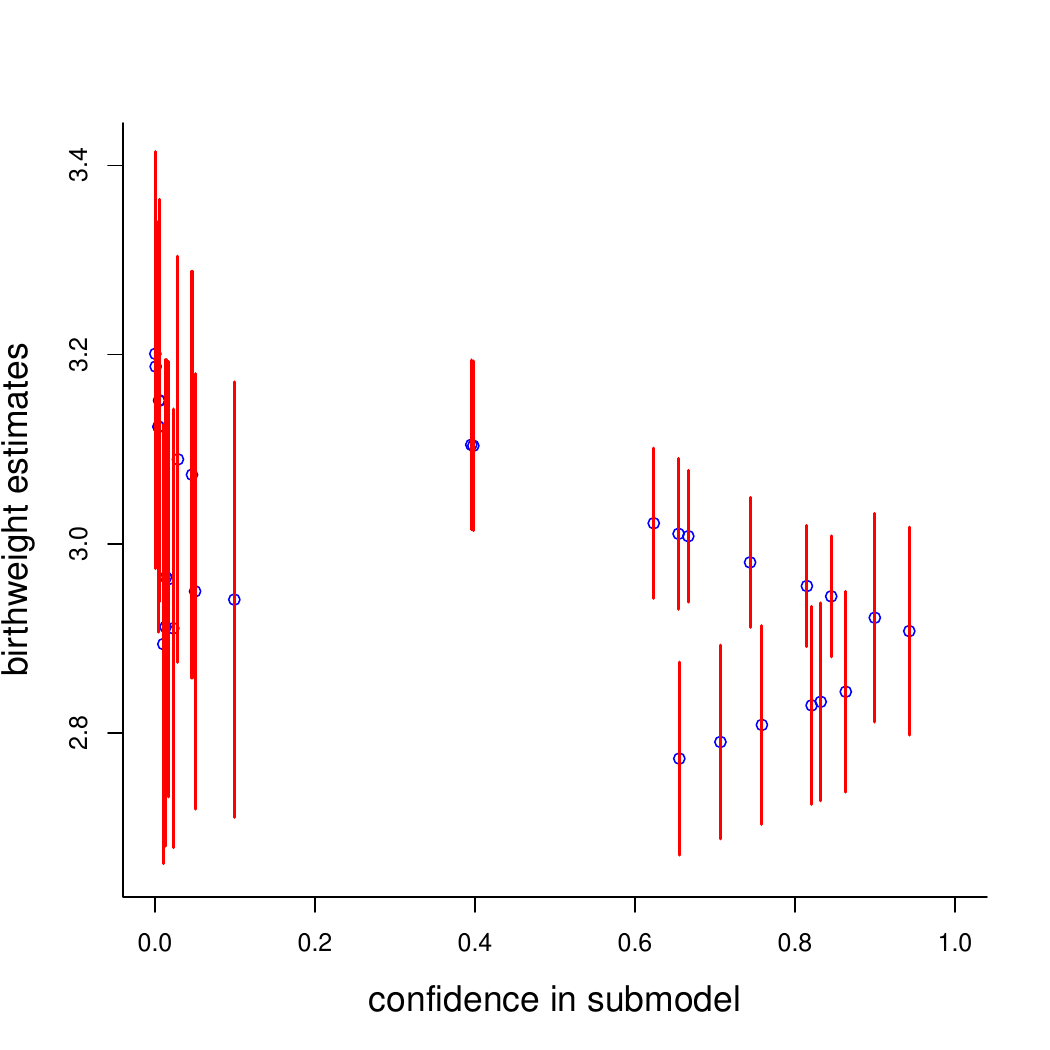}
\caption{Confidence FRIC plots for the $2^5=32$ models for estimating
the birthweight of the child-to-come, for Mrs.~Jones
(white, age 40, 60 kg, smoker). The $\conf(S)$ values
are also p-values for testing $\mse_S\le\mse_\wide$,
so submodels to the left in the figure are found
not useful for the task of estimating the focus parameter,
the mean $\mu=\E\,(Y\midd x_0)$. Submodels to the far right
are those in which we place trust in their ability to 
do better than the wide model. 
See Table \ref{table:table11} for identification of 
the best models and their $\conf(S)$ scores and estimates.} 
\label{figure:figure13}
\end{figure}

Two tools flowing from these insights are as follows. 
First, one may use the {\it median confidence estimators} 
$\hatt\rr_S^{0.50}=C_S^{-1}(0.50,\data)$ as FRIC scores 
when constructing the FRIC plots. This bypasses certain 
difficulties otherwise encountered, related to 
a decision of whether one needs to truncate nonnegative
estimates of squared biases to zero or not; 
see the discussion of \citet{CunenHjort20a}. 
Second, the epistemic confidence 
in a submodel's ability to work better than the wide model 
can be read off separately, i.e. 
\beq
\label{eq:conf}
\conf(S)
=C_S(1,\data)
=\Pr^*\{\mse_S/\mse_\wide\le 1\}
   \quad {\rm for\ }S\in\{1,\ldots,5\}. 
\eeq 
Here $\Pr^*$ denotes the post-data epistemic probability
placed on the event $\mse_S\le\mse_\wide$; 
see further discussion in Section \ref{section:rrwithCD}.
Good models, again for the focused purpose of 
estimating a given mean well, are those for which 
these $\conf(S)$ scores are high; 
conversely, candidate models where that score 
is low can be seen as not doing well for the purpose. 
Indeed, $\conf(S)$ is the p-value for
testing the hypothesis that $\mse_S\le\mse_\wide$. 
Figure \ref{figure:figure13} displays such a 
Confidence plot, with these confidence scores 
for the case of Mrs.~Jones. 
 

\begin{table}[h]
\small
\begin{center}
\begin{tabular}{c | c | c |  c |  c     c  c | c | c | c }
   &FRIC table:   &        &     &     & & Conf table: \\
   &in-or-out     & $\hatt\mu_S$ & FRIC & conf & & in-our-out & $\hatt\mu_S$ & FRIC & conf \\
\midrule 
 1 &   0 0 0 0 0 &  2.945 & 0.076 &    0.845 & & 0 0 1 1 1 &  2.908 &  0.224 & 0.943 \\
 2 &   0 1 0 0 0 &  2.956 & 0.076 &    0.815 & & 0 1 1 1 1 &  2.922 &  0.226 & 0.900 \\
 3 &   0 0 0 1 0 &  2.981 & 0.088 &    0.744 & & 0 1 1 1 0 &  2.844 &  0.209 & 0.863 \\
 4 &   0 1 0 1 0 &  3.008 & 0.090 &    0.667 & & 0 0 0 0 0 &  2.945 &  0.076 & 0.845 \\
 5 &   0 0 0 0 1 &  3.022 & 0.117 &    0.623 & & 0 1 1 0 1 &  2.833 &  0.203 & 0.832 \\
 6 &   0 1 0 0 1 &  3.011 & 0.118 &    0.654 & & 0 0 1 0 1 &  2.829 &  0.203 & 0.821 \\
 7 &   0 0 1 0 0 &  2.773 & 0.193 &    0.655 & & 0 1 0 0 0 &  2.956 &  0.076 & 0.815 \\
 8 &   0 1 1 0 0 &  2.791 & 0.195 &    0.707 & & 0 0 1 1 0 &  2.809 &  0.205 & 0.759 \\
 9 &   0 0 1 0 1 &  2.829 & 0.203 &    0.821 & & 0 0 0 1 0 &  2.981 &  0.088 & 0.744 \\
10 &   0 1 1 0 1 &  2.833 & 0.203 &    0.832 & & 0 1 1 0 0 &  2.791 &  0.195 & 0.707 \\
17 &   1 1 1 1 1 &  2.889 & 1.000 &     --   & & --        &  --    &  -- &  --   \\
\end{tabular}
\end{center}
\caption{$\fric_S$ (left part) and $\conf(S)$ (right part) 
tables for Mrs.~Jones, where the task is precise estimation 
of $\mu=\E\,(Y\midd x_0)$ for her given covariate vector $x_0$. 
The left part shows the ten best submodels 
of the $2^5=32$ candidate models, sorted by FRIC scores,
see Figure \ref{figure:figure11};  
the right part the ten best submodels, sorted
by the $\conf(S)$ scores, see Figure \ref{figure:figure13}. 
Models with low FRIC scores tend to have high $\conf(S)$ scores,
and vice versa. The in-our-out columns indicate
submodels by presence-absence of covariates.} 
\label{table:table11}
\end{table}

The FRIC plot and Confidence plot are already informative,
displaying the many submodel estimates along with
accurate information about estimated relative risks
and about degree of confidence. One might proceed
by taking an average of the the say ten best 
point estimates, or with more elaboration form
model average estimators of the form 
$\hatt\mu^*=\sum_S v(S)\hatt\mu_S$, with weights
$v(S)$ normalised to sum to one, 
and chosen higher for the best models, according to 
either the FRIC or the $\conf(S)$ scores. 

Further pertinent statistical information may be
gleaned from inspection of which models do well,
however, and also from examination of those which
do badly. Along with FRIC plots and Confidence plots, 
therefore, our R programme produces tables 
of the form shown here in Table \ref{table:table11}. 
Interestingly, for the covariate information $x_0$
for Mrs.~Jones, age doesn't matter, as the 
covariate $x_1$ does not enter any of the ten best
candidate models. 

Naturally, an essential point for the FRIC and Confidence plots 
and tables is that they are {\it focused}; they 
work for one given covariate vector $x_0$ at the 
time. They deliver personalised model ranking lists
and top estimates for the focused purpose. 
If Mrs.~Jones were a non-smoker, we put $x_3$ 
equal to 0 instead of 1, run our FRIC programmes
again, and find new plots and tables, 
new top models, and of course new estimates of 
the pertinent $\mu=\E\,(Y\midd x_0)$ parameter. 
For the contrafactual non-smoking Mrs.~Jones 
one then learns that age $x_1$ matters, after all,
and that the smoker indicator $x_3$ enters several
more of the top ranked models -- also, the average
estimate over the ten top models becomes 
$\hatt\mu=3.197$ kg, rather than the 2.915 kg 
she can expect as a smoker 
(of the same age, weight, and ethnicity). 

These general points are broadly valid for the 
growing list of FIC methods in the literature, where 
the Focused Information Criteria, introduced 
and developed in \citet{ClaeskensHjort03, ClaeskensHjort08}, 
have such focused aims, of finding the best models
for given focus parameters. These FIC type strategies 
have later been generalised and extended in several
directions, and for new classes of models; 
see \citet*{ClaeskensCunenHjort19} for a review.
The present paper is 
different in that the FRIC methods developed here are 
{\it accurate and exact}, with no large-sample 
approximations involved; in a sense finite-sample
corrections are not required as such are 
already built into the paper's exact formulae. 
This is partly due to the relative simplicity of 
the linear normal regression model and its submodels.
The exact finite-sample formulae worked out in 
sections below continue to be valid, as good 
approximations, in the case of error distributions
not being exactly normal; the more vital assumptions
are those of the linear mean, the constant variance,
and independence. 

Similar FRIC methods might be set up for 
addressing, approximating, estimating, and ranking 
relative risks, say $\mse_S/\mse_\wide$ for 
general focus parameters in generalised linear models.
Such constructions would need some elements of 
large-sample approximations, however, for 
variances, biases, and then the assessment
of estimates for these again; 
see indeed \citet{CunenHjort20a} for such developments. 

\subsection{The present paper}

As explained and exemplified with the case of Mrs.~Jones
above, the development in Sections \ref{section:basics}
and \ref{section:rrwithCD} concerns {\it given focus parameters},
giving in particular FRIC formulae for estimating 
relative risks and also focused confidence distributions,
for each potential $\mu(x_0)=\E\,(Y\midd x_0)$.
The apparatus can also be used for assessing and 
ranking all candidate models for their ability 
to estimate a particular regression parameter well. 
Importantly, the machinery can be extended to 
handling an ensemble of focus parameters jointly, 
as when one needs submodels working well for 
certain regions of covariates, like the stratum of 
all pregnant women of age below twenty in the 
birthweight study. Such extensions, leading to 
certain AFRIC scores, averaging over focus parameters, 
are developed in Section \ref{section:afric}. 
A notable special case is when all available 
covariate vectors are considered equally important, 
leading to particularly illuminating AFRIC formulae,
balancing overall fit with model complexity. 
This `unfocused AFRIC' is demonstrated to have
a connection to the Mallows Cp criterion 
in Section \ref{section:mallows}, which also
leads to both a modification of and additional
insights into the Mallows statistic. 
The AFRIC is illustrated using the birthweight dataset. 

Our paper is then rounded off with a list of concluding remarks
in Section \ref{section:concluding}, some pointing
to further research issues. These touch on model averaging, 
on generalisations to more complex regression models,
and on certain simplifications valid under special circumstances. 

Whereas the present paper presents novel variable 
selection methods, via focused assessment of 
relative risks, it is clear that other FIC based 
model selection methods also can work well; see the 
references pointed to above, and the review by 
\citet*{ClaeskensCunenHjort19}. There is an 
enormous literature on model selection in general 
and variable and subset selection methods for regressions
in particular, see \citet{ClaeskensHjort08} 
and e.g.~\citet{Miller02}. For the literature 
on confidence distributions and their many related
themes, see \citet{XieSingh13}, \citet{SchwederHjort16}, \citet{HjortSchweder18}. 

\section{Submodels and relative risks} 
\label{section:basics}

Consider the classic linear regression model 
\beq
\label{eq:widemodel}
y_i=x_i^\tr\beta+\eps_i
   =\beta_1x_{i,1}+\cdots+\beta_px_{i,p}+\eps_i 
   \quad {\rm for\ }i=1,\ldots,n, 
\eeq 
with the error terms $\eps_i$ being i.i.d.~from $\N(0,\sigma^2)$, 
and with covariate vectors $x_i$ of dimension $p$. 
We call this the wide model, as we shall be
considering the submodels associated with selecting 
covariates for inclusion and exclusion. In the present
and the following section we develop the necessary 
formalism for handling all submodels, find 
explicit formulae for the mean squared errors 
$\mse_S$ and $\mse_\wide$ of (\ref{eq:hereisrr}), 
construct several natural estimators for 
the relative risks $\mse_S/\mse_\wide$, and also 
develop the confidence distributions tools 
pointed to in (\ref{eq:hereisCD}).  
This is the basis for the 
FRIC and Confidence plots of the type 
shown in Figures \ref{figure:figure11} and \ref{figure:figure13}, 
and also of FRIC and Confidence tables 
as in Table \ref{table:table11}. 

The least squares estimator for $\beta$ in the
wide model is 
\beqn
\hatt\beta_\wide=\Sigma_n^{-1}n^{-1}\sumin x_iy_i,
\quad {\rm with} \quad 
\Sigma_n=n^{-1}\sumin x_ix_i^\tr,  
\eeqn 
and with this $p\times p$ matrix assumed to have
full rank. The traditional and unbiased estimator of the
residual variance is 
$\hatt\sigma^2=(n-p)^{-1}\sumin (y_i-x_i^\tr\hatt\beta)^2$,
and under normality we all know that 
\beqn
\hatt\beta_\wide\sim\N_p(\beta,(\sigma^2/n)\Sigma_n^{-1}),
\quad {\rm independent\ of} \quad 
\hatt\sigma^2/\sigma^2\sim\chi^2_m/m, 
\eeqn 
with the chi-squared with degrees of freedom $m=n-p$. 

Now consider a submodel, employing only 
$x_{i,j}$ for $j\in S$, a subset of $\{1,\ldots,p\}$. 
We need some notation for handling these submodels
and their ensuing estimators. Let $\pi_S$ be the 
projection function associated with $S$,
so that $\pi_S u=u_S$ picks the elements 
$u_j$ of $u=(u_1,\ldots,u_p)$ for which $j\in S$; 
$\pi_S$ can then be written as a matrix of size 
$|S|\times p$ with 0s and 1s, with $|S|$ the
number of elements in $S$.  The $\pi_S$ matrix
can also be seen as containing the rows $j$ 
of the $p\times p$ identity matrix corresponding 
to $j\in S$. 
The submodel in question, indexed by $S$, 
has mean function $x_{i,S}^\tr\beta_S$, 
and least squares estimator 
\beq
\label{eq:betahatS}
\hatt\beta_S=\Sigma_{n,S}^{-1}n^{-1}\sumin x_{i,S}y_i,
\quad {\rm with} \quad 
\Sigma_{n,S}=n^{-1}\sumin x_{i,S}x_{i,S}^\tr
   =\pi_S\Sigma_n\pi_S^\tr.   
\eeq

Variable selection may serve several purposes. Here we 
shall focus attention on the task of estimating 
{\it a given focus parameter}, namely 
$\mu=\E\,(Y\midd x_0)=x_0^\tr\beta$, 
for a given covariate vector $x_0$,
perhaps associated with some given individual 
or object. This is exemplified by the pregnant
Mrs.~Jones featured in our introduction section. 
Thus there are candidate estimators 
$\hatt\mu_S=x_{0,S}^\tr\hatt\beta_S$, one for each $S$. 
The mean squared error (mse) of the wide model 
estimator is easily written down, since it is 
unbiased, and one finds 
\beq
\label{eq:msewide}
\mse_\wide=\Var\,x_0^\tr\hatt\beta
   =(\sigma^2/n) x_0^\tr\Sigma_n^{-1}x_0. 
\eeq 
Setting up a clear formula for the mse of the $S$ based 
estimator is somewhat more tricky. 
Its variance is 
$\Var\,\hatt\mu_S=(\sigma^2/n)x_{0,S}^\tr\Sigma_{n,S}^{-1}x_{0,S}$, 
but some algebraic efforts are needed 
to sort out its bias in a fruitful fashion. Write 
\beq
\label{eq:SigmanSblocked}
\Sigma_n=\begin{pmatrix}
\Sigma_{00,S} &\Sigma_{01,S} \\
\Sigma_{10,S} &\Sigma_{11,S} \end{pmatrix}, 
\quad {\rm with\ inverse} \quad 
\Sigma_n^{-1}=\begin{pmatrix}
\Sigma^{00,S} &\Sigma^{01,S} \\
\Sigma^{10,S} &\Sigma^{11,S} \end{pmatrix}, 
\eeq 
with the relevant blocking into blocks inside and outside $S$.
Thus $\Sigma_{00,S}$ is the same as $\Sigma_{n,S}$ 
of (\ref{eq:betahatS}), etc.; also, the matrix 
\beq
\label{eq:hereisQS}
Q_S=\Sigma^{11,S}=(\Sigma_{11,S}-\Sigma_{10,S}\Sigma_{00,S}^{-1}\Sigma_{01,S})^{-1}
\eeq 
will be needed below, of size $|S^c|\times|S^c|$, 
with $S^c$ the complement set of $S$. 

\begin{lemma}{{\rm 
For estimating $\mu=x_0^\tr\beta$, the submodel based 
estimator $\hatt\mu_S=x_{0,S}^\tr\hatt\beta_S$ has bias 
$\omega_S^\tr\beta_{S^c}$, with 
$\omega_S=\Sigma_{10,S}\Sigma_{S,00}^{-1}x_{0,S}-x_{0,S^c}$, 
of length $|S^c|=p-|S|$. The identity 
$x_0^\tr\Sigma_n^{-1}x_0=x_{0,S}^\tr\Sigma_{n,S}^{-1}x_{0,S}
   +\omega_S^\tr Q_S\omega_S$ holds.  
The mean squared error of $\hatt\mu_S$ can be expressed as }}
\beqn
\mse_S
&=&(\sigma^2/n)x_{0,S}^\tr\Sigma_{n,S}^{-1}x_{0,S}
   +(\omega_S^\tr\beta_{S^c})^2 \\
&= &(\sigma^2/n)
   \{x_{0,S}^\tr\Sigma_{n,S}^{-1}x_{0,S}
   +n(\omega_S^\tr\beta_{S^c})^2/\sigma^2\} \\
&=&(\sigma^2/n)
   \{x_0^\tr\Sigma_n^{-1}x_0 
   +n(\omega_S^\tr\beta_{S^c})^2/\sigma^2 - \omega_S^\tr Q_S\omega_S\}.
\eeqn 
\label{lemma:lemma11}
\end{lemma}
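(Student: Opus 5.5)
The plan is to compute the mean of $\hatt\beta_S$ directly, read off the bias, and then obtain the variance identity from the block-inverse formula.

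\emph{Bias.} Order the coordinates so that those in $S$ come first; $\Sigma_n$ is then blocked as in (\ref{eq:SigmanSblocked}). Under the wide model, $\E\,y_i=x_{i,S}^\tr\beta_S+x_{i,S^c}^\tr\beta_{S^c}$. Inserting this into (\ref{eq:betahatS}) gives
\[
\E\,\hatt\beta_S=\Sigma_{00,S}^{-1}n^{-1}\sumin x_{i,S}(x_{i,S}^\tr\beta_S+x_{i,S^c}^\tr\beta_{S^c})
=\beta_S+\Sigma_{00,S}^{-1}\Sigma_{01,S}\beta_{S^c}.
\]
Hence
\[
\E\,\hatt\mu_S-\mu=x_{0,S}^\tr\Sigma_{00,S}^{-1}\Sigma_{01,S}\beta_{S^c}-x_{0,S^c}^\tr\beta_{S^c}=\omega_S^\tr\beta_{S^c},
\]
using $\Sigma_{10,S}=\Sigma_{01,S}^\tr$. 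The variance $(\sigma^2/n)x_{0,S}^\tr\Sigma_{n,S}^{-1}x_{0,S}$ holds because the $y_i$ are independent with variance $\sigma^2$, so $\Var\,\hatt\beta_S=\Sigma_{00,S}^{-1}(\sigma^2/n)\Sigma_{00,S}\Sigma_{00,S}^{-1}$. The first expression for $\mse_S$ is then variance plus squared bias, and the second is simply the same quantity with $\sigma^2/n$ factored out.

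\emph{Quadratic-form identity.} This is the main step. I would use the standard partitioned-inverse formulas:
\[
\Sigma^{00,S}=\Sigma_{00,S}^{-1}+\Sigma_{00,S}^{-1}\Sigma_{01,S}Q_S\Sigma_{10,S}\Sigma_{00,S}^{-1},\qquad
\Sigma^{01,S}=-\Sigma_{00,S}^{-1}\Sigma_{01,S}Q_S,\qquad
\Sigma^{11,S}=Q_S,
\]
where $Q_S$ is the inverse Schur complement defined in (\ref{eq:hereisQS}). Write $a=x_{0,S}$, $b=x_{0,S^c}$ and $c=\Sigma_{10,S}\Sigma_{00,S}^{-1}a$, so that $\omega_S=c-b$. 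Expanding $x_0^\tr\Sigma_n^{-1}x_0=a^\tr\Sigma^{00,S}a+2a^\tr\Sigma^{01,S}b+b^\tr Q_Sb$ gives
\[
a^\tr\Sigma_{00,S}^{-1}a+c^\tr Q_Sc-2c^\tr Q_Sb+b^\tr Q_Sb=a^\tr\Sigma_{00,S}^{-1}a+(c-b)^\tr Q_S(c-b),
\]
which is the claimed identity.

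The third expression for $\mse_S$ follows by substituting $x_{0,S}^\tr\Sigma_{n,S}^{-1}x_{0,S}=x_0^\tr\Sigma_n^{-1}x_0-\omega_S^\tr Q_S\omega_S$ into the second. The only real care needed is tracking signs and transposes in the block-inverse expansion; the rest is direct computation.
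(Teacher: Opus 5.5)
Your proposal is correct and takes the same approach as the paper: you obtain the bias from $\E\,\hatt\beta_S=\beta_S+\Sigma_{00,S}^{-1}\Sigma_{01,S}\beta_{S^c}$, exactly as the paper does, and then derive the mean squared error expressions by substitution. For the quadratic-form identity, which the paper leaves to ``algebraic manipulations and patience'', your partitioned-inverse expansion with $c=\Sigma_{10,S}\Sigma_{00,S}^{-1}x_{0,S}$ and $\omega_S=c-x_{0,S^c}$ correctly supplies the omitted details.
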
 

\vspace{-1.0cm}

\begin{proof}
We start from 
\beqn
\E\,\hatt\beta_S=\Sigma_{n,S}^{-1}n^{-1}\sumin x_{i,S} 
   (x_{i,S}\beta_S+x_{i,S^c}\beta_{S^c})
   =\Sigma_{00,S}^{-1}(\Sigma_{00,S}\beta_S+\Sigma_{01,S}\beta_{S^c}), 
\eeqn 
which leads to the bias expression 
\beqn
\E\,\hatt\mu_S-\mu
&=&x_{0,S}^\tr(\beta_S+\Sigma_{00,S}^{-1}\Sigma_{01,S}\beta_{S^c})
    -x_{0,S}^\tr\beta_S-x_{0,S^c}^\tr\beta_{S^c} \\ 
&=& (x_{0,S}^\tr\Sigma_{00,S}^{-1}\Sigma_{01,S}-x_{0,S^c}) \beta_{S^c}, 
\eeqn 
proving the first assertion. The identity decomposing 
$x_0^\tr\Sigma_n^{-1}x_0$ into two parts 
is proven via the required algebraic manipulations
and patience; also, the expressions for $\mse_S$ follow readily.
\end{proof} 

The lemma, combined with (\ref{eq:msewide}), 
leads to the relative risk expression 
\beq
\label{eq:rrfirst}
\rr_S={\mse_S\over \mse_\wide}
   ={x_{0,S}^\tr\Sigma_{n,S}^{-1}x_{0,S}+n\lambda_S^2
   \over x_0^\tr\Sigma_n^{-1}x_0}, 
\quad {\rm where} \quad
\lambda_S=\omega_S^\tr\beta_{S^c}/\sigma, 
\eeq 
the point also being that the $\sigma^2/n$ term cancels out.
In particular, submodel $S$ is better than the wide model
provided 
\beq
\label{eq:betterornot} 
|\lambda_S|=|\omega_S^\tr\beta_{S^c}|/\sigma
<(x_0^\tr\Sigma_n^{-1}x_0-x_{0,S}^\tr\Sigma_{n,S}^{-1}x_{0,S})^{1/2} / \rootn
= (\omega_S^\tr Q_S\omega_S)^{1/2}/\rootn. 
\eeq 
Note that with more data it typically becomes 
increasingly harder for a simple model to beat the wide model,
but for moderate datasets it is perfectly 
possible for (\ref{eq:betterornot}) to hold, 
depending on the size of the implied bias $\omega_S^\tr\beta_{S^c}$. 
It is noteworthy that if $\beta_{S^c}$ is close to 
being orthogonal to $\omega_S$, then $|\lambda_S|$
is small and the submodel in question may do 
very well, even if $\beta_{S^c}$ is far from zero, 
i.e.~even if the submodel is far from being correct as such. 
This is a version of the classical 
bias-versus-variance balancing game,
which now actively depends on the given $x_0$ 
under focus.   

Of course the statistician cannot directly know 
whether (\ref{eq:betterornot}) holds or not, 
since $\beta_{S^c}/\sigma$ is unknown,  
but clear estimators may be constructed for 
the risk ratios (\ref{eq:rrfirst}), of the form 
\beq
\label{eq:genericfric}
\fric_S
=\hatt\rr_S
={x_{0,S}^\tr\Sigma_{n,S}^{-1}x_{0,S}+\hatt\Lambda_S
   \over x_0^\tr\Sigma_n^{-1}x_0}, 
\quad {\rm with\ }\hatt\Lambda_S{\rm\ estimating\ }
  \Lambda_S=n\lambda_S^2.
\eeq 
There are several natural such, as we come back to below. 
For each of these constructions, we have a FRIC plot 
and a FRIC table, as explained and exemplified 
in the introduction section.  

It is fruitful to factor in the particular variance 
\beqn
\tau_S^2=\Var\,(\omega_S^\tr\rootn\hatt\beta_{\wide,S})
   =\sigma^2\omega_S^\tr Q_S\omega_S,    
\eeqn 
with $Q_S=\Sigma^{11,S}$ the matrix of (\ref{eq:hereisQS}). 
Now consider the parameter 
\beq
\label{eq:hereiskappaS}
\kappa_S={\omega_S^\tr\rootn\beta_{S^c}\over \tau_S}
   ={\rootn\omega_S^\tr\beta_{S^c}\over (\omega_S^\tr Q_S\omega_S)^{1/2}\sigma}, 
\eeq 
with estimator 
\beq
\label{eq:kappakappa}
\hatt\kappa_S={\omega_S^\tr\rootn\hatt\beta_{\wide,S^c}\over \hatt\tau_S}
   ={\omega_S^\tr\rootn\hatt\beta_{\wide,S^c}\over 
   (\omega_S^\tr Q_S\omega_S)^{1/2} \hatt\sigma},
\quad {\rm with\ distribution} \quad 
{\N(\kappa_S,1)\over \hatt\sigma/\sigma}
   \sim t_m(\kappa_S),
\eeq
the noncentral $t_m$ distribution with excentre parameter $\kappa_S$. 
Also, $\hatt\kappa_S^2\sim F_{1,m}(\kappa_S^2)$, 
the noncentral $F$ with degrees of freedom $(1,m)$
and excentre parameter $\kappa_S^2$. 
With this $\kappa_S$ we have the exact relative risk expression 
\beq
\label{eq:rrsecond}
\rr_S
={\mse_S\over \mse_\wide}
={x_{0,S}^\tr\Sigma_{n,S}^{-1}x_{0,S} + \omega_S^\tr Q_S\omega_S\,\kappa_S^2
   \over x_0^\tr\Sigma_n^{-1}x_0}
=1-{\omega_S^\tr Q_S\omega_S\over x_0^\tr\Sigma_n^{-1}x_0}(1-\kappa_S^2), 
\eeq 
cf.~(\ref{eq:rrfirst}). 
In this risk ratio expression all quantities are known,
from the $n\times p$ covariate matrix of the $x_i$, apart from $\kappa_S$. 

For estimating the relative risk ratios (\ref{eq:rrsecond}), 
for comparing and ranking the different submodels, 
and in particular to decide on which submodel
is the best for the given purpose of estimating $x_0^\tr\beta$,
there are a few related options. Using $\hatt\kappa_S^2$
directly leads to overshooting, as 
\beqn
\E\,\hatt\kappa_S^2=\E\,{\kappa_S^2+1\over \hatt\sigma^2/\sigma^2}
   ={m\over m-2}(\kappa_S^2+1). 
\eeqn 
Two natural estimators of the $\kappa_S^2$ term are hence 
\beqn
{m-2\over m}(\hatt\kappa_S^2-1) 
\quadandquad 
\max\Bigl\{{m-2\over m}(\hatt\kappa_S^2-1),0\Bigr\}. 
\eeqn 
The first is unbiased for $\kappa_S^2$, the second 
is the truncated version where negative estimates
of the nonnegative quantity is set to zero. This leads 
to the unbiased FRIC scores, say 
\beq
\label{eq:fricu}
\begin{array}{rcl}
\fric^u_S
&=&\displaystyle
   \Bigl\{ x_{0,S}^\tr\Sigma_{n,S}^{-1}x_{0,S} 
   + {m-2\over m}\omega_S^\tr Q_S\omega_S\,(\hatt\kappa_S^2-1) \Bigr\}
\Big/x_0^\tr\Sigma_n^{-1}x_0 \\ 
&=&\displaystyle
\Bigl[ x_{0,S}^\tr\Sigma_{n,S}^{-1}x_{0,S} 
   + {m-2\over m}
\Bigl\{ {n(\omega_S^\tr\hatt\beta_{\wide,S^c})^2\over \hatt\sigma^2} 
   - \omega_S^\tr Q_S\omega_S\Bigr\} \Bigr]
\Big/x_0^\tr\Sigma_n^{-1}x_0,  
\end{array}
\eeq 
and its truncated cousin, say $\fric_S^t$, 
where the squared bias estimator component,
i.e.~the second term of the numerator here, 
is set to zero in cases where $\hatt\kappa_S^2<1$. 
The probability of this event taking place is 
$F_{1,m}(1,\kappa_S^2)$, which can be as high as 
$\Pr\{\chi^2_1<1\}=0.683$ in the case of zero 
bias and $m=n-p$ high. 
The two related sets of scores $\fric_S^u$ and $\fric_S^t$
otherwise tend to be highly correlated and hence to
produce highly related model rankings. 
For Figures \ref{figure:figure11}--\ref{figure:figure13}
and Table \ref{table:table11}, we have used 
the truncated version of (\ref{eq:fricu}). 

\section{Confidence distributions for the relative risks} 
\label{section:rrwithCD}

Above we were able to derive clear expressions 
for the relative risks $\rr_S=\mse_S/\mse_\wide$, 
as with (\ref{eq:rrsecond}), and then constructed
natural FRIC scores for estimating these, essentially via 
understanding the extent to which $\hatt\kappa_S^2$
of (\ref{eq:kappakappa}) overshoots the $\kappa_S^2$ 
parameter of (\ref{eq:hereiskappaS}). 
These scores have indeed been used for the 
FRIC Figure \ref{figure:figure11} and FRIC Table \ref{table:table11}.

\begin{figure}[h]
\centering
\includegraphics[scale=0.55]{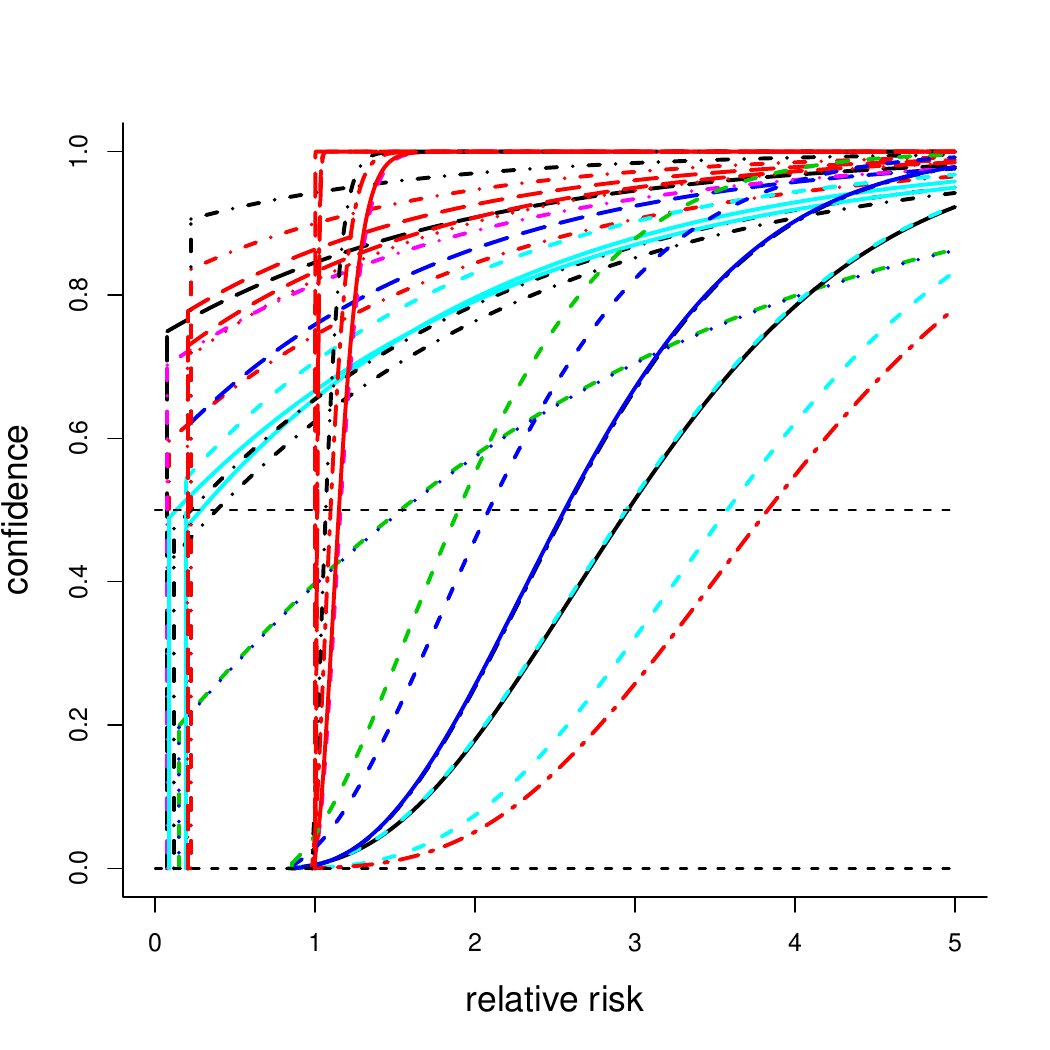}
\caption{Confidence cumulative distribution functions
$C_S(\rr_S)$ for the relative risks $\risk_S/\risk_\wide$, 
as per (\ref{eq:cdforrr}). Those with high confidence
in values below 1 are better for the focused prediction
job for Mrs.~Jones's baby than the wide model. 
The median confidence estimates $\fric_S^{0.50}$ are also
read off from the plot.}
\label{figure:figure15}
\end{figure}

Importantly, we may also set up a clear and exact 
confidence distribution for the $\kappa_S^2$ parameter, 
\beq
\label{eq:cdforkappasq}
C_S^*(\kappa_S^2,\data)
=\Pr_{\kappa_S}\{\hatt\kappa_S^2\ge\hatt\kappa_{S,\obs}^2\}
=1-F_{1,m}(\hatt\kappa_{S,\obs}^2,\kappa_S^2), 
\eeq 
where again $m=n-p$. 
It may be used for more nuanced inference, complete
with accurate confidence intervals at all levels,
which we also exploit below on the more 
canonical $\rr_S$ scale. Note that the confidence 
distribution here has a pointmass at zero, 
of size $C_S^*(0,\data)=1-F_{1,m}(\hatt\kappa_S^2)$, with 
$F_{1,m}(u)$ the cumulative distribution function for 
the ordinary $F$ distribution with degrees of freedom $(1,m)$. 
Furthermore, since 
\beqn
\kappa_S^2
   ={x_0^\tr\Sigma_n^{-1}x_0\,\rr_S-x_{0,S}^\tr\Sigma_{n,S}^{-1}x_{0,S}
   \over \omega_S^\tr Q_S\omega_S}, 
\eeqn 
there is a clear, full, and exact confidence distribution 
for the relative risk ratio of $\rr_S=\mse_S/\mse_\wide$, 
namely
\beq
\label{eq:cdforrr}
C_S(\rr_S,\data)=1-F_{1,m}\Bigl(\hatt\kappa_S^2,
   {x_0^\tr\Sigma_n^{-1}x_0\,\rr_S-x_{0,S}^\tr\Sigma_{n,S}^{-1}x_{0,S}
   \over \omega_S^\tr Q_S\omega_S}\Bigr),   
\eeq 
valid above a well-defined minimum point, 
\beqn
\rr_S\ge \rr_{S,\min}
   =x_{0,S}^\tr\Sigma_{n,S}^{-1}x_{0,S} / x_0^\tr\Sigma_n^{-1} x_0.
\eeqn 
As in \citet{CunenHjort20a} it is useful to 
display all these confidence distributions 
in a diagram, with their different starting points. 
It is also useful to read off all 
\beqn
C_S(1,\data)
=1-F_{1,m}\Bigl(\hatt\kappa_S^2,
   {x_0^\tr\Sigma_n^{-1}x_0 - x_{0,S}^\tr\Sigma_{n,S}^{-1}x_{0,S}
   \over \omega_S^\tr Q_S\omega_S}\Bigr)
=1-F_{1,m}(\hatt\kappa_S^2,1), 
\eeqn 
the epistemic confidence probability that 
$\rmse_S\le\rmse_\wide$. Good models are those where
these probabilities of being better than the wide model
are high, which means $|\hatt\kappa_S|$ ratios sufficiently small. 
Figure \ref{figure:figure13} illustrates this,
where the statistician learns which submodels can 
be expected to be particularly well-working when it
comes to predicting the birthweight of 
Mrs.~Jones's child-to-come. 

The confidence distributions also invite the 
{\it median confidence estimators} 
for the relative risks, yielding one more natural FRIC score,
\beqn
\fric_S^{0.50}
=\hatt\rr_S^{0.50}
=C_S^{-1}(0.50,\data)
=\min\{\rr_S\ge\rr_{S,\min}\colon
   C_S(\rr_S,\data)\ge0.50\}. 
\eeqn 
In cases where the confidence pointmass at the minimum value, 
i.e.~$C_S(\rr_{S,\min},\data)=1-F_{1,m}(\hatt\kappa_S^2)$,
is already above 0.50, then the median confidence estimate
is equal to this minimum start value $\rr_{S,\min}$. 
Figure \ref{figure:figure15} shows the 31 confidence curves
of (\ref{eq:cdforrr}), indicating also the 
median confidence estimates. 
We may also read off confidence intervals, 
e.g.~$\{\rr_S\colon C_S(\rr_s,\data)\le0.90\}$,
to check which models are associated with 
high confidence for doing a better job than the wide model. 

Opening the FRIC box in this case, checking which 
submodels do particularly well for the case of Mrs.~Jones,
with the confidence distributions and median confidence
estimates, one learns that there is high and essential
agreement with the list of good models given 
in Table \ref{table:table11}. Also, the correlations 
between the three relative risks estimates 
proposed in this and the preceding section, 
i.e.~$\fric_S^u$, $\fric_S^t$, and now $\fric_S^{0.50}$, 
are all above 0.977. 

\section{AFRIC scores for variable selection} 
\label{section:afric}

\def\calU{{\cal U}}

Above we have developed versions of FRIC, with 
one particular focus parameter 
$\E\,(y\midd x_0)=x_0^\tr\beta$ at a time. Suppose
now that we take an interest in many such at the same
time, perhaps a stratum in the space of covariates,
like all white smoking mothers-to-be in the context
of the Mrs.~Jones example of our introduction. 
Consider covariate vectors $x(u)$ of interest, 
for an index set $u\in\calU$, along with a measure
of relative importance, say $v(u)$ for 
$\E\,(Y\midd x(u))=x(u)^\tr\beta$. 
These $x(u)$, and the importance weight $v(u)$,
are selected by the statistician, inside the 
given context, and they may or may not form 
a subset of the covariate vectors 
$\{x_1,\ldots,x_n\}$ associated with the observed data.
A natural combined measure of loss associated
with selecting submodel $S$ for the purpose of 
estimating all these $x(u)^\tr\beta$ parameters is 
\beqn
L_S=\sum_{u\in\calU} 
   v(u) \{x_S(u)^\tr\hatt\beta_S-x(u)^\tr\beta\}^2. 
\eeqn
By the efforts of Section \ref{section:basics} 
the associated risk is 
\beqn
\risk_S=\E\,L_S
   =(\sigma^2/n)\sum_{u\in\calU} v(u) 
   [x_S(u)^\tr\Sigma_{n,S}^{-1}x_S(u)+n\{\omega_S(u)^\tr\beta_{S^c}\}^2/\sigma^2],  
\eeqn
with 
$\omega_S(u)=\Sigma_{10,S}\Sigma_{00,S}^{-1}x_S(u)-x_{S^c}(u)$
for $u\in\calU$. In particular, for the wide model the risk is 
\beqn
\risk_\wide
   =(\sigma^2/n)\sum_{u\in\calU} v(u) 
   x(u)^\tr\Sigma_n^{-1}x(u), 
\eeqn 
leading to a clear relative risk ratio of total risks, 
\beq
\label{eq:rrforafric}
\rr_S={\risk_S\over \risk_\wide}
={\sum_{u\in\calU} v(u)[x_S(u)^\tr\Sigma_{n,S}^{-1}x_S(u)
   +n\{\omega_S(u)^\tr\beta_{S^c}\}^2/\sigma^2]
   \over \sum_{u\in\calU} v(u) x(u)^\tr\Sigma_n^{-1}x(u)}. 
\eeq 

It is useful to work out a clearer expression for the second
term in the numerator, also when it comes to the task
of estimating the full $\rr_S$ from data. We may write 
\beqn
\gamma_S
=\sum_{u\in\calU} v(u){n\over \sigma^2} \beta_{S^c}^\tr\omega_S(u)\omega_S(u)^\tr\beta_{S^c}
={n\over \sigma^2}\beta_{S^c}^\tr A_S\beta_{S^c}, 
\eeqn 
with $A_S=\sum_{u\in\calU} v(u)\omega_S(u)\omega_S(u)^\tr$. 
Since $\rootn\hatt\beta_{S^c}/\sigma \sim \N_{p-|S|}(\rootn\beta_{S^c}/\sigma,Q_S)$, 
with $Q_S$ as in (\ref{eq:hereisQS}), we find that 
the natural start estimator 
$\tilda\gamma_S=n\hatt\beta_{\wide,S^c}^\tr A_S\hatt\beta_{\wide,S^c}/\hatt\sigma^2$
has mean 
\beqn
\E\,\tilda\gamma_S
={m\over m-2}
   \Tr\{ A_S(n\beta_{S^c}\beta_{S^c}^\tr/\sigma^2+Q_S) \}
={m\over m-2}\{\gamma_S+\Tr(A_SQ_S)\},
\eeqn 
Two natural estimators of the $\gamma_S$ quantity are therefore 
\beqn
\hatt\gamma_S^u={m-2\over m}\{\tilda\gamma_S-\Tr(A_SQ_S)\}
\quadandquad 
\hatt\gamma_S^t={m-2\over m}\max\{0,\tilda\gamma_S-\Tr(A_SQ_S)\}, 
\eeqn 
the unbiased estimator and its truncated-to-zero version.
This leads to averaged FRIC scores, say 
\beq
\label{eq:africu}
\afric_S^u
=\hatt\rr_S^u
={\sum_{u\in\calU} v(u)x_S(u)^\tr\Sigma_{n,S}^{-1}x_S(u)
   +\hatt\gamma_S^u 
   \over \sum_{u\in\calU} v(u) x(u)^\tr\Sigma_n^{-1}x(u)}, 
\eeq
and an accompanying $\afric_S^t$ with the truncated version 
$\hatt\gamma_S^t$ instead of the unbiased one. 
For the given collection of $x(u)$ covariate vectors,
along with importance numbers $v(u)$, one may now 
compute the $\afric_S$ scores for all candidate models,
with the best models those with smallest such scores. 
 
\begin{figure}[h]
\centering
\includegraphics[scale=0.55]{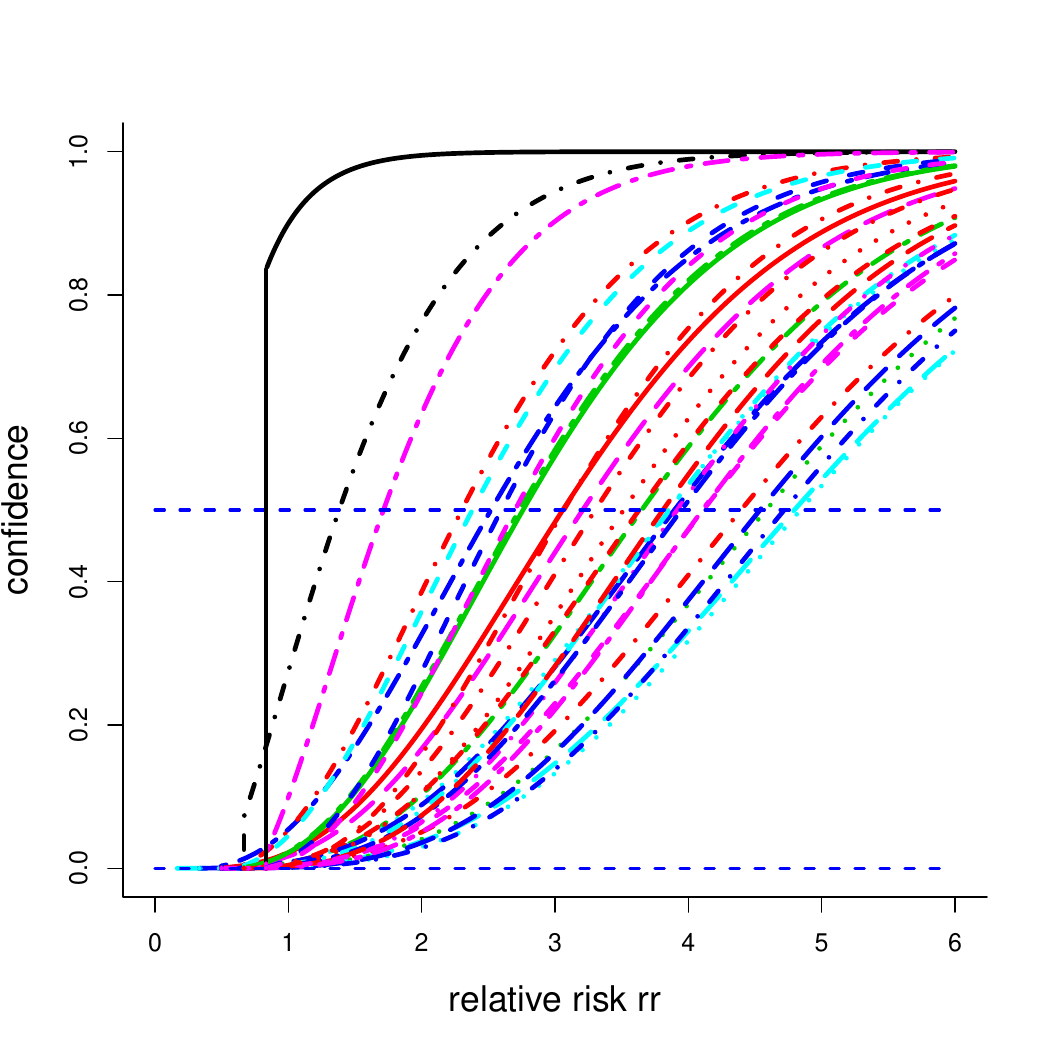}
\caption{Confidence cumulative distribution functions
$C^*_S(\rr_S)$ for the relative risks $\risk_S/\risk_\wide$,
for the 31 submodels for the mothers-and-babies data,
for the case of putting equal importance to all available
covariate vectors, i.e.~using $v(u)=1/n$ for the $n$ available
vectors.  
Only the submodel corresponding to keeping $x_2,x_3,x_4,x_5$,
but excluding $x_1$, with the black full curve, 
exhibits a clear confidence for working better than 
the wide model. The $\afric^{0.50}_S$ scores of (\ref{eq:africm}) 
are read off where the confidence curves cross $0.50$. }
\label{figure:figure14}
\end{figure}

An important special case of the AFRIC strategy is to 
declare all available covariate vectors $x_i$ 
equally important, with importance weight $1/n$. 
For this case we have 
$\sumin n^{-1} x_i^\tr\Sigma_n^{-1}x_i
   =\Tr(\Sigma_n^{-1}\Sigma_n)=p$, and similarly 
\beqn
\sumin n^{-1}x_{i,S}^\tr\Sigma_{n,S}x_{i,S}
   =\Tr(\Sigma_{n,S}^{-1}\Sigma_{n,S})=|S|, 
\eeqn 
the cardinality of $S$. 
Also, the $\omega_S$ vector for $x_i$ is  
$\omega_{i,S}=\Sigma_{10,S}\Sigma_{00,S}^{-1}x_{i,S}-x_{i,S^c}$. 
The second term of the numerator for 
the relative risk (\ref{eq:rrforafric}) may hence be written 
$\gamma_S=n\beta_{S^c}^\tr A_S\beta_{S^c}/\sigma^2$, with 
\beqn
A_S&=&\sumin n^{-1} (\Sigma_{10,S}\Sigma_{00,S}^{-1}x_{i,S}-x_{i,S^c})
                 (\Sigma_{10,S}\Sigma_{00,S}^{-1}x_{i,S}-x_{i,S^c})^\tr \\ 
&=&\Sigma_{10,S}\Sigma_{00,S}^{-1}\Sigma_{00,S}\Sigma_{00,S}^{-1}\Sigma_{01,S}
   -2\,\Sigma_{10,S}\Sigma_{00,S}^{-1}\Sigma_{01,S}
   +\Sigma_{11,S} \\
&=& \Sigma_{11,S}-\Sigma_{10,S}\Sigma_{00,S}^{-1}\Sigma_{01,S} 
 =Q_S^{-1}, 
\eeqn
see (\ref{eq:hereisQS}). We learn that with equal importance for 
all covariate vectors, the relative risk illuminatingly 
can be written 
\beq
\label{eq:rrSandgammaS}
\rr_S
={\risk_S\over \risk_\wide} 
={|S|+\gamma_S \over p} 
={|S|+n\beta_{S^c}^\tr Q_S^{-1}\beta_{S^c}/\sigma^2 \over p}. 
\eeq 
With $\tilda\gamma_S=n\hatt\beta_{\wide,S^c}^\tr Q_S^{-1}
   \hatt\beta_{\wide,S^c}/\hatt\sigma^2$
the AFRIC scores above simplify to 
\beq
\label{eq:africspecial}
\afric_S^u
=\hatt\rr_S^u
= \Bigl[|S|+{m-2\over m}\{\tilda\gamma_S-(p-|S|)\} \Bigr] / p, 
\eeq 
with a sister version using truncation to zero 
for the second term in the case of $\tilda\gamma_S<p-|S|$. 
These scores are easily computed, yielding a complete ranking
of all candidate models, for those with smallest 
estimated relative risks to the highest. 

For this special case, when all available covariates 
are considered equally important, we also learn that 
a submodel $S$ is better than the wide one provided 
\beqn
\gamma_S=n\beta_{S^c}^\tr Q_S^{-1}\beta_{S^c}/\sigma^2\le p-|S|. 
\eeqn 
This can be assessed accurately via 
\beqn
\tilda\gamma_S
={n\hatt\beta_{\wide,S^c}^\tr Q_S^{-1}\hatt\beta_{\wide,S^c}\over \hatt\sigma^2}
\sim {\chi^2_{p-|S|}(\gamma_S)\over \chi^2_m/m}
\sim(p-|S|)F_{p-|S|,m}(\gamma_S), 
\eeqn
featuring the noncentral $F$ with degrees of freedom
$(p-|S|,m)$ and excentre parameter $\gamma_S$. 
This leads to a full and exact confidence distribution
for the $\gamma_S$ parameter, and via the representation 
(\ref{eq:rrSandgammaS}) above also for the relative risk 
$\rr_S$ itself: 
\beq
\label{eq:cdforafric}
\begin{array}{rcl}
C^*_S(\rr_S)
&=&\displaystyle 
\Pr_{\gamma_S}\{ (|S|+\tilda\gamma_S)/p \ge (|S|+\tilda\gamma_{S,\obs})/p\} \\
&=&\displaystyle 
1-F_{p-|S|,m}\Bigl({\tilda\gamma_{S,\obs}\over p-|S|},
   p\,\rr_S-|S|\Bigr)
\quad {\rm for\ }\rr_S\ge |S|/p. 
\end{array}
\eeq 
Note that these start at minimal $\rr_S$ values, namely 
$|S|/p$, with start confidence pointmasses equal to 
$1-F_{p-|S|,m}(\tilda\gamma_{S,\obs}/(p-|S|))$. 

There are several uses of these confidence distributions
for the relative risks. First, we may use the 
median confidence estimator to create a different 
AFRIC score:  
\beq
\label{eq:africm}
\afric_S^{0.50}=(C^*_S)^{-1}(0.50)
   =\min\{\rr_S\ge|S|/p\colon C^*_S(\rr_S)\ge0.50\}. 
\eeq 
This version bypasses the difficulties related to 
truncation-to-zero or not. Second, we may read off
the post-data confidence probabilities that 
given submodels work better than the wide model, via 
\beqn
C^*_S(1)
=\Pr^*\{\rr_S\le 1\}
=1-F_{p-|S|,m}\Bigl({\tilda\gamma_{S,\obs}\over p-|S|},p-|S|\Bigr). 
\eeqn 

Figure \ref{figure:figure14} displays the confidence 
distribution functions $C^*_S(\rr_S)$ for all the 
relative risks $\risk_S/\risk_\wide$ for the dataset described
in the introduction, with five covariates $x_1,\ldots,x_5$
recorded to assess their potential influence on 
birthweight $y$. Using focused FRIC methods developed in
Sections \ref{section:basics}--\ref{section:rrwithCD} 
we have seen that for various given purposes, 
like predicting the outcome for a given mother-to-be, 
there might be several submodels doing far better 
than the wide model; see 
Figures \ref{figure:figure11}--\ref{figure:figure13}.
Using the AFRIC apparatus, for the case of all available
covariates deemed to have equal importance, 
Figure \ref{figure:figure14} indicates that most submodels
can be expected to do worse than the full wide model,
however. Only one model, the one keeping $x_2,x_3,x_4,x_5$
but excluding $x_1$ (age of mother), is then scoring 
significantly better than the wide model. This is also
shown by inspecting the AFRIC scores, where 
those of (\ref{eq:africspecial}) and (\ref{eq:africm})
turn out to be very highly correlated. 

\section{AFRIC, the AIC, and the Mallows criterion}
\label{section:mallows}

The FIC and FRIC methods are indeed focused, set up
to work well with a given focus parameter, and 
are also for that reason different in spirit 
from overall variable and model selection methods
like the AIC, the BIC, the Mallows statistic, etc. 
The particular AFRIC method of (\ref{eq:africspecial})
is however an intended de-focused overall selection
criterion, where all covariate vectors in the data 
collection of $(x_i,y_i)$ are considered equally important, 
and below we make some comparisons
with other methods. For general material on these
selection criteria, see \citet{ClaeskensHjort08}. 

\subsection{The AIC}

The log-likelihood for the $S$ subset model takes the form 
\beqn
\ell_{n,S}(\beta_S,\sigma_S)
   =-n\log\sigma_S-\half(1/\sigma_S^2)\sumin (y_i-x_{i,S}^\tr\beta_S)^2
   -\half n\log(2\pi), 
\eeqn 
with maximisers $\hatt\beta_S$ of (\ref{eq:betahatS}) and 
\beq
\label{eq:rss}
\hatt\sigma_S^2={\rss(S)\over n},
\quad {\rm with} \quad 
\rss(S)=\sumin (y_i-x_{i,S}^\tr\hatt\beta_S)^2. 
\eeq 
Often the unbiased version $\rss(S)/(n-|S|)$ is used
instead, but $\hatt\sigma_S^2$ with denominator $n$ is 
the maximum likelihood estimator. The log-likelihood 
maximum is hence 
$\ell_{n,\max}=-n\log\hatt\sigma_S-\half n-\half n\log(2\pi)$,
so that the AIC score becomes 
\beqn
\AIC_S
=2\,\ell_{n,S,\max}-2\,(|S|+1)
=-2n\log\hatt\sigma_S-2|S|+c_n, 
\eeqn 
with $c_n$ an immaterial constant not depending on 
the data or the candidate model. In this formulation
models with higher AIC values are preferred, 
so the AIC hence selects the model with the lowest score 
$n\log\hatt\sigma_S+|S|$. 

There are several variations on the AIC scheme, 
including finite-sample corrections and other
approximations, see \citet[Ch.~2]{ClaeskensHjort08}. 
One version is related to a simple Taylor expansion,
valid when the $\hatt\sigma_S/\hatt\sigma_\wide$ ratios
are not far from 1. Starting with 
\beqn
\AIC_S
=-n\log\Bigl(\hatt\sigma_\wide^2{\hatt\sigma_S^2\over \hatt\sigma_\wide^2}\Bigr)
   -2|S|+c_n 
=-n\log\Bigl(1+{\hatt\sigma_S^2\over \hatt\sigma_\wide^2}-1\Bigr)
    -2|S|+c_n'
\eeqn 
we reach the approximation $\AIC_S\doteq\AIC_S^*$, where 
\beq
\label{eq:aicstar} 
\AIC_S^*=-n{\hatt\sigma_S^2\over \hatt\sigma_\wide^2}-2|S|+c_n''
=-{\rss(S)\over \hatt\sigma_\wide^2}-2|S|+c_n'', 
\eeq 
with $c_n',c_n''$ further immaterial constants. 
Thus this approximate AIC selects the set $S$ with smallest 
$n\hatt\sigma_S^2/\hatt\sigma_\wide^2+2|S|
=n\,\rss(S)/\rss(\wide)+2|S|$. 
This version of AIC is also arrived at if one starts
with the modelling setup that for submodel $S$,
one has $y_i\sim\N(x_{i,S}^\tr\beta_S,\sigma^2)$, 
with the same $\sigma$ across candidate models. 
This is arguably conceptually wrong, as models 
with more covariates ought to have smaller and
not identical residual variances, but as the Taylor argument 
shows the error is slight if the 
$\hatt\sigma_S/\hatt\sigma_\wide$ ratios are not too big. 
There are connections to the AFRIC 
of (\ref{eq:africspecial}), which we comment on below,
after noting that $\AIC^*$ is also equivalent 
to the Mallows Cp method. 

\subsection{The Mallows statistic}

There are several equivalent or close-to-equivalent 
definitions of the Mallows Cp statistic, 
from \citet{Mallows73, Mallows95}; 
see e.g.~\citet{Hansen07} and \citet[Ch.~12]{EfronHastie16}
for somewhat different perspectives and types of uses. 
For the present purposes, for each submodel $S$, let 
\beq
\label{eq:mallows}
M_S
=\rss(S)/\hatt\sigma^2-n+2|S|, 
\eeq
where the tradition is to use the unbiased 
$\hatt\sigma^2=\rss(\wide)/(n-p)$ in the denominator,
as opposed to the maximum likelihood version 
$\hatt\sigma_\wide^2=\rss(\wide)/n$ encountered 
for the AIC above. With this definition, we have 
$M_\wide=n-p-n+2p=p$ for the wide model. 
The difference between the two variance estimators 
is small in the traditional setup with $p$ small or moderate
and $n$ moderate or large. So apart from this often
minor detail, the Mallows score (\ref{eq:mallows}) 
is just a minus sign and a constant away from 
the AIC approximation (\ref{eq:aicstar}). 
So high scores $\AIC^*_S$ are equivalent to low scores $M_S$. 

There is a certain connection from the Mallows score
and hence also the AIC approximation  
to the neutral-version AFRIC score, 
which we now look into. It will become apparent that 
the AFRIC score (\ref{eq:africspecial}) can be seen 
as a more sophisticated cousin of Mallows. Note that 
with $1-2/m$ approximated as 1 we have  
$\afric_S$ equal to 
\beq
\label{eq:africapprox} 
\tilda\gamma_S+2|S|
=n\hatt\beta_{\wide,S^c}^\tr Q_S^{-1} \hatt\beta_{\wide,S^c}/\hatt\sigma^2
   +2|S|, 
\eeq 
up to constants, 
which has a clear structural similarity to the
$\rss(S)/\hatt\sigma^2+2|S|$ with Mallows.  
Also, both $\tilda\gamma_S$ and $\rss(S)/\hatt\sigma^2$ 
relate to the size of the mean parameter modelling bias,
the $\beta_{S^c}$, but this modelling bias is being
assessed differently by the two methods, and 
actually more precisely via the AFRIC. 

We now examine the required details, to learn the extent
to which low scores of $\tilda\gamma_S+2|S|$ 
is related to or sometimes equivalent to low scores 
for $\rss(S)/\hatt\sigma^2+2|S|$. 
The nutshell story for the neutral AFRIC, as developed 
in Section \ref{section:afric}, is 
(i) that the crucial quantity to estimate is $|S|+\gamma_S$, 
with $\gamma_S=n\beta_{S^c}^\tr Q_S^{-1}\beta_{S^c}/\sigma^2$, and 
(ii) that when using the estimator 
$\tilda\gamma_S=n\hatt\beta_{\wide,S^c}^\tr Q_S^{-1}
   \hatt\beta_{\wide,S^c}/\hatt\sigma^2$, and 
correcting for bias, the result is 
$|S|+\tilda\gamma_S-(p-|S|)$, modulo approximating 
$1-2/m$ by 1. The parallel nutshell story for the 
Mallows criterion must be presented differently,
in that step (ii), the statistic itself, 
comes before the insight of type (i), 
uncovering the underlying crucial population parameter
being estimated. It turns out to be the same 
as for the neutral AFRIC. 

\begin{lemma}{{\rm 
Consider the Mallows variable $M_S^0=\rss(S)/\sigma^2-n+2|S|$,
here  using the real $\sigma$ rather than the estimated one,
as with $M_S$ of (\ref{eq:mallows}). Its mean is $|S|+\gamma_S$.}}
\label{lemma:lemma12}
\end{lemma}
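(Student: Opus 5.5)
The plan is to compute $\E\,\rss(S)$ directly under the wide model (\ref{eq:widemodel}), using the standard formula for the mean of a quadratic form, and then identify the noncentral part with $\sigma^2\gamma_S$. Write $X$ for the $n\times p$ design matrix and $X_S$ for its columns in $S$, so that $\rss(S)=y^\tr(I-H_S)y$ with $H_S=X_S(X_S^\tr X_S)^{-1}X_S^\tr$ the hat matrix of the submodel. Under the wide model $y\sim\N_n(X\beta,\sigma^2I)$, hence
\beqn
\E\,\rss(S)=\sigma^2\Tr(I-H_S)+\beta^\tr X^\tr(I-H_S)X\beta
   =\sigma^2(n-|S|)+\beta^\tr X^\tr(I-H_S)X\beta.
\eeqn
Dividing by $\sigma^2$ and subtracting $n-2|S|$ gives $\E\,M_S^0=|S|+\beta^\tr X^\tr(I-H_S)X\beta/\sigma^2$. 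It therefore remains to show that the noncentral term equals $n\beta_{S^c}^\tr Q_S^{-1}\beta_{S^c}$.

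For that step, note that $(I-H_S)X_S=0$, so only the $S^c$ columns survive: $X\beta=X_S\beta_S+X_{S^c}\beta_{S^c}$ and $(I-H_S)X\beta=(I-H_S)X_{S^c}\beta_{S^c}$. Consequently
\beqn
\beta^\tr X^\tr(I-H_S)X\beta
=\beta_{S^c}^\tr\{X_{S^c}^\tr X_{S^c}-X_{S^c}^\tr X_S(X_S^\tr X_S)^{-1}X_S^\tr X_{S^c}\}\beta_{S^c}.
\eeqn
In the block notation (\ref{eq:SigmanSblocked}) the middle matrix is $n(\Sigma_{11,S}-\Sigma_{10,S}\Sigma_{00,S}^{-1}\Sigma_{01,S})$, which is $nQ_S^{-1}$ by (\ref{eq:hereisQS}). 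This is the same Schur complement that appeared as $A_S$ in Section \ref{section:afric}. Hence the noncentral term is $n\beta_{S^c}^\tr Q_S^{-1}\beta_{S^c}=\sigma^2\gamma_S$, and $\E\,M_S^0=|S|+\gamma_S$.

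The only point requiring some care is the identification of the quadratic form with the Schur complement $Q_S^{-1}$. It rests on the factor $n$ in $\Sigma_n=n^{-1}X^\tr X$ matching the $n$ in $\gamma_S$, and on the idempotence and symmetry of $I-H_S$. The rest is the routine trace computation $\Tr(H_S)=|S|$.
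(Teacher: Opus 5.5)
Your proposal is correct and follows the paper's own proof. Both compute the mean of the quadratic form $y^\tr(I-H_S)y$ under the wide model, use $(I-H_S)X_S=0$ to keep only the $X_{S^c}\beta_{S^c}$ part, and identify $n^{-1}X_{S^c}^\tr(I-H_S)X_{S^c}$ with $Q_S^{-1}$. You go slightly further by verifying that matrix identity, (\ref{eq:matrixidentity}), through the direct block expansion; the paper only states that it "can be proved separately".
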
 


\begin{proof}
Note first that the textbook formula $\E\,\rss(S)=(n-|S|)\sigma^2$
does not apply here, since submodel $S$ does not hold,
but has a mean modelling bias. 
There are several formulae within reasonable reach 
for the mean of $\rss(S)$, under the conditions of 
the wide model (\ref{eq:widemodel}), 
but certain linear algebra efforts are required 
to prove that is can be expressed as 
$|S|+\gamma_S$, with the same $\gamma_S$ as 
for the AFRIC. 

In Section \ref{section:basics} we wrote the basic
model in terms of the individual $y_i=x_i^\tr\beta+\eps_i$, 
and now it is practical to also work with the 
linear algebra version, writing 
$y=X\beta+\eps\sim\N_n(X\beta,\sigma^2 I)$,
with $X$ the $n\times p$ matrix of the covariate vectors.  
We start with 
\beqn
y-X_S\hatt\beta_S=\{ I-X_S (X_S^\tr X_S)^{-1}X_S^\tr \}y=(I-H_S)y, 
\eeqn  
with $H_S=X_S(X_S^\tr X_S)^{-1}X_S^\tr$, 
the idempotent hat matrix with $H_S^2=H_S$ and $\Tr(H_S)=|S|$. 
For the mean vector of the data vector $y$,
write $\xi=X\beta=X_S\beta_S+X_{S^c}\beta_{S^c}$, so that 
$b_S=\E\,y-X_S\beta_S=X_{S^c}\beta_{S^c}$ 
is the mean modelling error when employing submodel $S$. 
We now have 
\beqn
\xi^\tr(I-H_S)\xi
&=&(X_S\beta_S+X_{S^c}\beta_{S^c})^\tr (I-H_S) 
 (X_S\beta_S+X_{S^c}\beta_{S^c}) \\
&=&(X_S\beta_S+X_{S^c}\beta_{S^c})^\tr (I-H_S)X_{S^c}\beta_{S^c} \\
&=&b_S^\tr(I-H_S)b_S, 
\eeqn 
which leads to 
\beqn
\E\,\rss(S)
=\E\,y^\tr (I-H_S)y 
&=&\xi^\tr (I-H_S)\xi+\Tr \{(I-H_S)\,\Var\,y\} \\
&=&b_S^\tr(I-H_S)b_S+\sigma^2(n-|S|).
\eeqn
We have reached 
$\E\,M_S^0=n-|S|+\phi_S-n+2|S|=|S|+\phi_S$, with 
\beqn
\phi_S
=b_S^\tr(I-H_S)b_S/\sigma^2
=\beta_{S^c}X_{S^c}^\tr(I-H_S)X_{S^c}\beta_{S^c}/\sigma^2. 
\eeqn 
But this is verifiably the same as $\gamma_S$, 
in that the matrix identity 
\beq
\label{eq:matrixidentity}
n^{-1}X_{S^c}^\tr(I-H_S)X_{S^c}
=Q_S^{-1}
=\Sigma_{11,S}-\Sigma_{10,S}\Sigma_{00,S}^{-1}\Sigma_{01,S}
\eeq 
can be proved separately, 
with $Q_S$ of (\ref{eq:hereisQS}). 
\end{proof}

The exact mean of the Mallows statistic $M_S$ 
of (\ref{eq:mallows}) is a more complicated expression
than for the $M_S^0$ of the lemma, but 
since $\hatt\sigma/\sigma$ is close to 1 with 
high probability, for $m=n-p$ moderate or large, 
we would still have $|S|+\gamma_S$ as the 
approximate mean of $M_S$. 

The neutral AFRIC of (\ref{eq:africspecial}) 
has a couple of extra sales points when compared to 
the partly similar Mallows criterion. It is exact,
and properly fine-tuned for finite samples, with no
approximations involved; also, it has a clear statistical
interpretation, as an exactly unbiased estimator 
of overall relative risk, the $\rr_S=\risk_S/\risk_\wide$
of (\ref{eq:rrSandgammaS}). With the AFRIC formulation, 
part of the usefulness of the strategy is to 
only care about submodels with AFRIC scores less 
than 1. Finally there is a clear, 
exact, and optimal associated confidence distribution, 
the $C_S^*(\rr_S)$ of (\ref{eq:cdforafric}). 


\section{Concluding remarks} 
\label{section:concluding}

We round off our paper by offering a list of 
concluding remarks, pointing both to related themes, 
further applications of the developed methods, 
and to a few issues for further studies.  

\medskip
{\it A. The birthweight dataset.} 
The dataset on birthweights for 189 babies, 
with covariate information for their mothers, 
stems from \citet{HosmerLemeshow89}, and has 
later been used by several authors for reanalyses
and illustration of new methodology; 
see e.g.~\citet{ClaeskensHjort08}. In nearly all of 
these publication the emphasis has been on 
the event `lower than 2.50 kg or not', however, 
i.e.~with logistic regressions with variations, 
whereas we here actually use the actual birthweights 
on their continuous scale.   

Our statistical narrative has focused Mrs.~Jones
(age 40, weight 60, white, smoker),  
to convey that the methodology is intended to work
for one person or object at the time; were she 
not a smoker we could easily re-run all programmes 
to produce FRIC plots and FRIC tables for sorting
and ranking all candidate models again, indeed
finding a different `best model' for that purpose. 
 
\medskip
{\it B. Candidate models.}
Our variable selection machinery is able to handle 
comparisons between all subset driven models, 
as seen in the introduction example with Mrs.~Jones
and the ensuing $2^5=32$ candidate models for her 
five covariates. It is useful if the statistician
can carry out an initial screening, however, 
omitting implausible models; one might e.g.~allow 
interaction terms $x_jx_k$ to enter a candidate model 
only when both $x_j$ and $x_k$ are on board. Sometimes 
there is also a natural ordering of models, 
as in nested situations. Screening away implausible
models makes the rest of the FRIC and AFRIC 
analyses simpler and sharper. 

A pertinent reminder is that results developed 
and used in our paper do rely on the starting assumption 
that the wide model (\ref{eq:widemodel}) holds. 
This might be checked separately, by any of battery 
of goodness-of-fit checks for the linear regression model.
The more critical of these underlying assumptions
are independence and constancy of error variance;
our methods will continue to work well even
if residuals are not exactly normally distributed. 

\medskip
{\it C. A single submodel versus the wide.}
A special case of the FRIC and AFRIC setups is when
there is a single submodel $S$ to check against 
the wide model. Using Lemma \ref{lemma:lemma11} 
we reached the characterisation (\ref{eq:betterornot}),
that $S$ does better than the wide if and only if 
$\rootn|\omega_S^\tr\beta_{S^c}|<(\omega_S^\tr Q_S\omega_S)^{1/2}$.
This is an infinite strip in the space of $\beta_{S^c}$, 
of those not far from perpendicularity to $\omega_S$,
and this depends on the $x_0$ under focus. 
The FRIC methods of Sections \ref{section:basics}--\ref{section:rrwithCD} 
are built to take this possibility into account, 
that model $S$ can perform well even if it is 
far from correct, and, specifically, even if it 
might be doing a bad job for another $\E\,(Y\midd x_0^*)$. 
This makes the FRIC methods different from those which 
directly or indirectly involve testing whether 
$\beta_{S^c}$ is zero or close to zero. The best focused 
tool for comparing $S$ with the wide is the 
full confidence curve $C_S(\rr_S,\data)$ of (\ref{eq:cdforrr}),
see Figure \ref{figure:figure15}. 

The neutral or de-focused AFRIC scheme developed 
in Section \ref{section:afric} is different, however,
and when all covariate vectors are seen as equally
important matters are seen to hinge on the `global' 
statistic $\tilda\gamma_S=n\hatt\beta_{S^c}^\tr Q_S^{-1}\hatt\beta_{\wide,S^c}$.
The best tool for comparing $S$ with the wide 
is now the confidence distribution $C_S^*(\rr_S)$ 
of (\ref{eq:cdforafric}) for the overall relative risk 
$\rr_S=\risk_S/\risk_\wide$. 

\begin{figure}[h]
\centering
\includegraphics[scale=0.55]{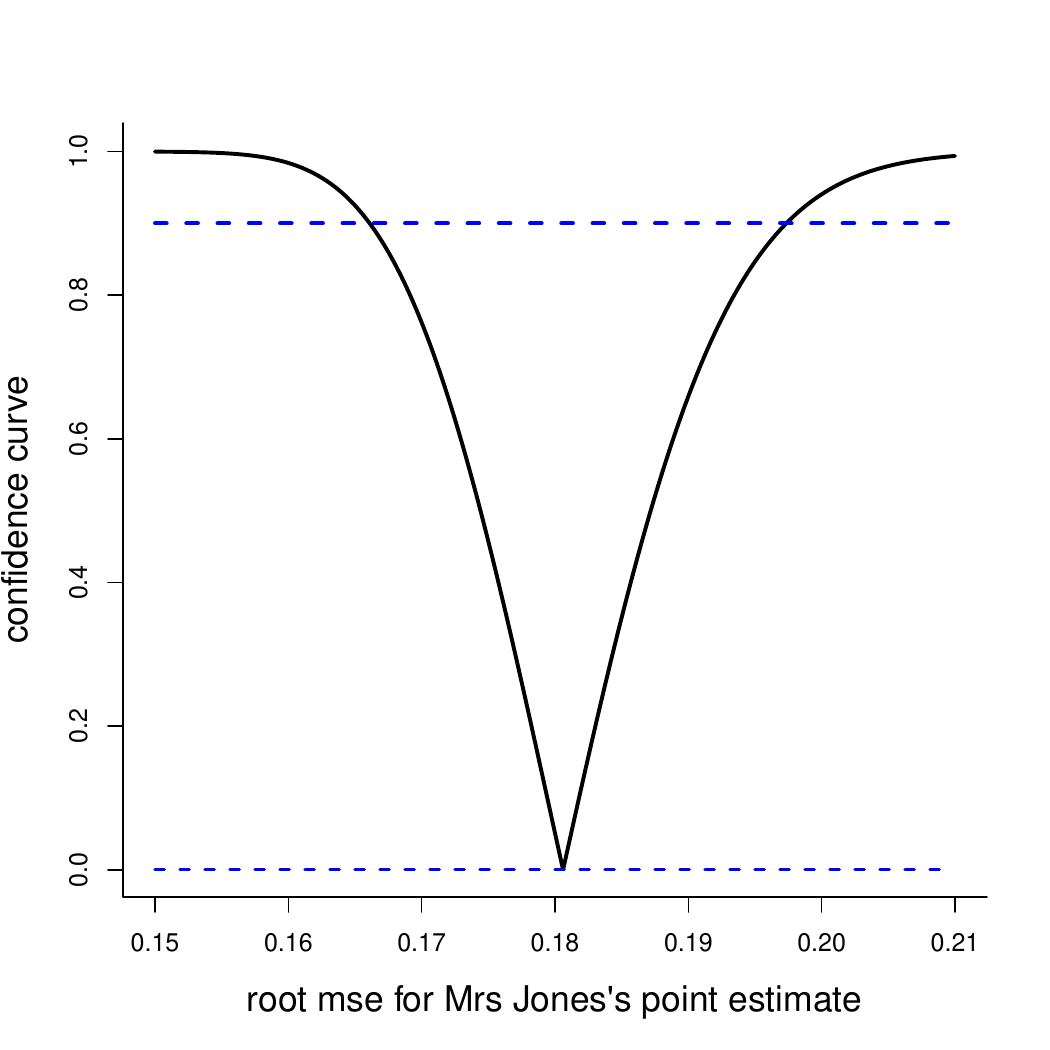}
\caption{Confidence curve for the root mean squared error
of the estimator $\hatt\mu_\wide=x_0^\tr\hatt\beta_\wide$
for the case of Mrs.~Jones. The estimate itself 
is 2.889 kg, and the estimated root-mse of that estimate
is 0.180, with the figure displaying the full $\cc(\rmse_\wide)$.}
\label{figure:figure16}
\end{figure}

\medskip
{\it D. Model averaging.}
A natural follow-up to the present study is that 
of constructing good model averaging procedures, 
of the type $\hatt\mu^*=\sum_S w(S)\hatt\mu_S$, 
perhaps with data-driven weights summing to one
over the different submodel based choices $\hatt\mu_S$. 
These weights could reflect how well the different 
candidate models do according to the $\fric(S)$
or $\conf_S$ criteria. This would relate to similar
strategies studied in \citet{Hansen07}, 
\citet{Claeskens16}, and \citet{CunenHjort20a}. 

\medskip
{\it E. rmse and mse.}
We have given our estimators and confidence curves for 
the scale of relative risks, $\rr_S=\mse_S/\mse_\wide$; 
also, the $\afric_S=\hatt\rr_S$ works by intention on
that same scale. It is sometimes more meaningful to work 
in the root-relative-risk scale of $\rrr_S=\rmse_S/\rmse_\wide$,
however, since the root-mse are on the scale of the 
focus parameter. This is a matter of choice and convenience; 
all relevant numbers, and hence also the FRIC and Confidence plots, 
are easily converted to the $\rrr_S$ scale if wished for. 
For the $2^5=32$ submodels encountered for Mrs.~Jones
we chose $\rr_S$ for visual clarity with the FRIC plot; 
both the best and the worst candidate models are 
easier to spot with Figure \ref{figure:figure11} 
than with a corresponding $\fric^{1/2}$ plot. 
Incidentally, the Confidence plot of Figure \ref{figure:figure13}
is not affected by the scale used for relative risks. 
 
In addition to letting FRIC do accurate assessments
of all relative risks $\rmse_S/\rmse_\wide$, 
an analysis might not be quite complete without also assessing 
the $\rmse_\wide$ itself, the root-mse of the estimator 
$\hatt\mu_\wide=x_0^\tr\hatt\beta_\wide$. A formula for 
this quantity is $\rmse_\wide=(\sigma/\rootn)(x_0^\tr\Sigma_n^{-1}x_0)^{1/2}$,
and both estimation and accurate assessment are now 
easily available through $\hatt\sigma\sim\sigma(\chi^2_m/m)^{1/2}$. 
For the case of Mrs.~Jones, estimating $\mu=\E\,(Y\midd x_{\rm jones})$
via the wide model gives $\hatt\mu_\wide=2.889$, 
its estimated root-mse is 0.180, and a full confidence curve is 
\beqn
\cc(\rmse_\wide)=\Big| 1
   -2\,\Gamma_m\Bigl(m{\hatt\mse_\wide\over \mse_\wide}\Bigr) \Big|,
\eeqn  
here with $\hatt\mse_\wide$ observed at $0.180^2$, 
and with $\Gamma_m(\cdot)$ the distribution function for the $\chi^2_m$. 
This is displayed in Figure \ref{figure:figure16}, 
pointing to the median confidence estimate, 
and with confidence intervals easily read off for 
each wished-for confidence level. This confidence curve
is the optimal one, see \citet[Ch.~6]{SchwederHjort16}.
The FRIC analysis of Section \ref{section:intro} 
indicates that there are several submodel based 
estimates of $\E\,(Y\midd x_{\rm jones})$ with smaller
root-mse than 0.180. 

It might of course also be of interest to assess 
the $\rmse_S$ quantities directly. This is 
indeed approachable, via 
\beqn
\mse_S
=\mse_\wide\,\rr_S
=(\sigma^2/n)
   (x_{0,S}^\tr\Sigma_{n,S}^{-1}x_{0,S} + \omega_S^\tr Q_S\omega_S\,\kappa_S^2), 
\eeqn 
using Lemma \ref{lemma:lemma11} again. 
The technical obstacle is that estimators of $\sigma$
and $\kappa_S$ are dependent, see (\ref{eq:kappakappa}), 
and there is no exact confidence curve for the product 
expression of $\mse_S$. Good approximations may be worked
out, via techniques of \citet[Ch.~3-4]{SchwederHjort16}, 
but our focus on the $\rr_S$ has bypassed the need for 
such approximations.  

\medskip
{\it F. When the $\Sigma_n$ matrix is diagonal.} 
Suppose the variance matrix $\Sigma_n=n^{-1}\sumin x_ix_i^\tr$
is equal to the identity matrix. 
Then the components of $\hatt\beta_\wide=n^{-1}\sumin x_iy_i$ 
are independent and $\hatt\beta_{\wide,j}\sim\N(\beta_j,\sigma^2/n)$. 
The structure of FRIC and AFRIC formulae simplify and it also 
becomes easier to examine aspects of performance. 
Considering first the AFRIC of (\ref{eq:africspecial}), we find 
\beqn
\afric_S^u
=\Bigl[|S|+(1-2/m)\Bigl\{ \sum_{j\notin S}
   n\hatt\beta_{\wide,j}^2/\hatt\sigma^2-(p-|S|)\Bigr\} \Bigr]/p,  
\eeqn 
where the numerator may be written 
\beqn
\sum_{j\in S} 1 + \sum_{j\notin S}\hatt\phi_j
=\sum_{j=1}^p [I\{j\in S\}+\hatt\phi_j\,I\{j\notin S\}],
\quad {\rm where} \quad 
\hatt\phi_j={m-2\over m}
   \Bigl({n\hatt\beta_{\wide,j}^2\over\hatt\sigma^2}-1\Bigr). 
\eeqn 
The winning subset $S^*$, where this becomes smallest,
is where $j$ is selected if and only if $\hatt\phi_j>1$,
which means 
$n\hatt\beta_{\wide,j}^2/\hatt\sigma^2>2+2/(m-2)$. 
This corresponds to coordinate-wise inspection for 
non-zero-ness of the $\beta_j$, with a significance level 
of about $\Pr\{\chi^2_1>2\}=0.157$, for large $n$.  
For the FRIC of (\ref{eq:fricu}) we similarly find 
\beqn
\fric_S^u={\sum_{j\in S}x_{0,j}^2
   +(1-2/m)\sum_{j\notin S}x_{0,j}^2 (n\hatt\beta_{\wide,j}^2/\hatt\sigma^2-1)
   \over \sum_{j=1}^p x_{0,j}^2}. 
\eeqn 
The numerator can be written 
$\sum_{j=1}^p x_{0,j}^2[I\{j\in S\}+\hatt\phi_j\,I\{j\notin S\}]$,
leading to precisely the same winner $S^*$ as just found
for the AFRIC case, i.e.~independent of the $x_0$ in question.
This is a consequence of the diagonal assumption for $\Sigma_n$,
however, and in general the best FRIC model very much
depends on the $x_0$ under scrutiny. 

\medskip
{\it G. More general regression models.}
The present paper has worked inside the wide linear 
regression model (\ref{eq:widemodel}), leading to 
precise finite-sample confidence distributions for 
all relative risks, for any focused mean parameter 
$\E\,(Y\midd x_0)$. For more general regression models, 
as with the generalised linear models variety, and 
for general focus parameters, like above-threshold 
probabilities $\Pr\{Y\ge y_0\midd x_0\}$, 
one must rely on approximations, via large-sample arguments 
or bootstrapping. Such are worked out in \citet{CunenHjort20a},
with applications to logistic and Poisson regression setups; 
it is in general not possible to reach exact finite-sample
coverage outside the linear regression model, however. 

\section*{Acknowledgments} 

This article has benefitted from fruitful conversations
with C\'eline Cunen and Emil Stoltenberg, 
and from discussions with other members of 
the Norwegian Research Council funded project 
FocuStat: Focus Driven Statistical Inference 
with Complex Data, at the Department of Mathematics,
University of Oslo.
 
{{\small
\bibliographystyle{biometrika}
\bibliography{fic_bibliography2020.bib}
}}

\end{document}